\title{Recognizing Numbers}
\author{Pranshu Gaba\inst{1}\orcidID{0009-0000-8012-780X} \and Arnab Sur\inst{2}\orcidID{0009-0007-5811-2864}}
\authorrunning{P. Gaba and A. Sur}
\institute{Tata Institute of Fundamental Research, Mumbai, India
\email{pranshu.gaba@tifr.res.in}\\
  \and Chennai Mathematical Institute, Chennai, India
  \email{arnabs@cmi.ac.in}
}
\begin{document}
\maketitle

\begin{abstract}
  The use of monoids in the study of word languages recognized by finite-state automata has been quite fruitful. In this work, we look at the same idea of “recognizability by finite monoids” for other monoids. In particular, we attempt to characterize recognizable subsets of various additive and multiplicative monoids over integers, rationals, reals, and complex numbers. While these recognizable sets satisfy properties such as closure under Boolean operations and inverse morphisms, they do not enjoy many of the nice properties that recognizable word languages do.

  \keywords{Recognizability \and Additive monoids \and Multiplicative monoids \and Algebraic automata theory}
\end{abstract}

\section{Introduction} \label{sec:intro}
Regular (or recognizable) subsets of the free monoid (i.e., the set of finite words over a finite alphabet) are one of the most well-studied classes of languages. Many seemingly unrelated notions --- recognizability by finite-state automata, recognizability by finite monoids or congruences of finite index, rationality, monadic-second order definability --- all coincide for free monoids~\cite{eilenberg1974automata,sakarovitch2009elements,pin2010mathematical}. The algebraic notion of recognizability by morphisms into finite monoids allows us to use tools from the structure theory of finite monoids which has been particularly useful to characterize certain subclasses of the recognizable languages. For instance, we have Schützenberger’s celebrated result that the star-free languages (languages described by rational expressions using only the union, concatenation, and complement operations) are exactly those recognized by aperiodic monoids (monoids which do not contain a group). Further, there are algebraic characterizations of piecewise and locally testable languages as well~\cite{eilenberg1974automata, pin2010mathematical}.

\paragraph*{Related work.}
There has been a lot of research to define algebraic recognizability for structures other than sets of finite words like sets of trees~\cite{gecseg2015tree}, infinite words~\cite{perrin1992infinite}, timed words~\cite{maler2004recognizable, bouyer2001algebraic}, and data words~\cite{bouyer2001algebraic,bojanczyk2013nominal}. The goal has been to show their equivalence with recognizability by automata models or definability by logics for these structures. However these ``recognizable sets'' are not recognized by finite monoids but by other algebras. The survey \cite{10.1007/978-3-540-28629-5_8} provides a great overview of algebraic recognizability of all the structures mentioned above.

Indeed, recognizability by finite monoids seems to be best suited only for finite words. However, we want to characterize the recognizable (by finite monoids) subsets of as many monoids as we can to figure out why they may or may not be interesting. For instance, the recognizable subsets of non-negative reals under addition ($\mathbb{R}_{\ge 0}$) are just $\emptyset, \mathbb{R}_{\ge 0}, \{0\}$, and $\mathbb{R}_{> 0}$ \cite{dima2001algebraic}. This fact indicates that one must look at other algebraic structures to define recognizability for timed languages.

\paragraph*{Contributions.}
In this work, we will primarily look at monoids of numbers: integers, rationals, reals, and complex numbers, each with addition and multiplication operations. 
For the additive monoids, we bring together known results and point out generalizations to arbitrarily divisible monoids.
To the best of our knowledge, the results for multiplicative monoids are new. 

For recognizability by finite monoids, monoids of numbers may seem very specific as only some of their more general facts like infinite generators or commutativity lend the characterizations we find, but these monoids illustrate them very well. The additive monoids of the naturals and the integers are the exceptions since they are special cases of the free monoid and the free group respectively.

\paragraph*{Outline.} In \Cref{sec:prelims}, we recall the definitions and notations that we shall be using for our work.
In \Cref{sec:additive-monoids}, we characterize the additive monoids 
\(\mathbb{Z}\),
\(\mathbb{Q}\),
\(\mathbb{R}\),
\(\mathbb{C}\),
\(\mathbb{Z}_{\ge 0}\),
\(\mathbb{Q}_{\ge 0}\),
and \(\mathbb{R}_{\ge 0}\),
Along the way, we show the triviality of recognizable subsets of a class of monoids that are \emph{arbitrarily divisible}.
In \Cref{sec:multiplicative-monoids}, we characterize the multiplicative monoids 
\(\mathbb{Z}\),
\(\mathbb{Q}\), 
\(\mathbb{R}\), and
\(\mathbb{C}\),
where we also show a useful property satisfied by recognizable subsets of (countable) infinitely generated free monoids.

\section{Preliminaries} \label{sec:prelims}
We recall some definitions and notations used in this text.

\paragraph*{Sets of numbers.}
We denote the set of integers by \(\mathbb{Z}\),  the set of rational numbers by \(\mathbb{Q}\), the set of real numbers by \(\mathbb{R}\), and the set of complex numbers by \(\mathbb{C}\).
In addition, for \(\mathbb{X} \in \{\mathbb{Z}, \mathbb{Q}, \mathbb{R}\}\), we denote non-negative numbers by \(\mathbb{X}_{\ge 0}\), positive numbers by \(\mathbb{X}_{> 0}\), and non-zero numbers by \(\mathbb{X} \setminus \{0\}\).
In particular, we also sometimes denote the natural numbers, that is the non-negative integers, by \(\mathbb{N}\). 

\paragraph*{Monoids.}
A \emph{monoid} $(M, \cdot_M, 1_M)$ is a set~$M$ with an associative binary operation~$\cdot_M$ and a unit element~$1_M$ for the binary operation, that is, $m \cdot_M (1_M) = m = (1_M) \cdot_M m$ for all $m \in M$.
We say that a monoid \((M, \cdot_M, 1_M)\) is finite (resp. infinite) if the underlying set \(M\) is finite (resp. infinite).
We omit the operation and the unit element when it is clear from context, that is, we simply use the symbol of the set $M$ to denote the monoid $(M, \cdot_M, 1_M)$.
We also omit the binary operation when clear from context, that is, we represent $m_1 \cdot_M m_2$ by $m_1 m_2$ and $m \cdot_M m $ by $m^2$.
We extend the monoid operation to sets $X, Y \subseteq M$ by defining $X \cdot_M Y = \{x \cdot_M y \mid x \in X, y \in Y\}$. The asterate (or Kleene-star) operation is then defined as 
\begin{equation*}
    X^* = \bigcup\limits_{i\ge 0} X^i
\end{equation*}
where $X^0 = \{1_M\} \text{ and } X^i = X \cdot_M X^{i-1} \text{ for } i > 0$.
Some examples of monoids are groups and free monoids.
A \emph{group} $(G, \cdot_G, 1_G)$ is a monoid in which every element has an inverse, that is, for all \(g \in G\), there exists a unique element \(g^{-1} \in G\) such that \(g \cdot_G g^{-1} = 1_{G} = g^{-1} \cdot_{G} g\).
The \emph{order} of a group $G$ is the number of elements in $G$ if $G$ is finite, or else, if $G$ is infinite, then the order of $G$ is also infinite.
Moreover, if \(G\) is a finite group, then for every element $g \in G$, the \emph{order of $g$} is the smallest positive integer $i$ such that $g^i = 1_G$.

\paragraph*{Idempotents.}
An element $e$ of a monoid $M$ is \emph{idempotent} if $e^2 = e$. 
In particular, the unit element $1_M$ of a monoid $M$ is idempotent.
The unit element $1_G$ of a group $G$ is the unique idempotent in the group $G$.
A monoid $M$ is \emph{idempotent} if all elements $m$ of $M$ are idempotent.

\paragraph*{Zero.} 
An element $z$ of a monoid $M$ is a \emph{zero} of $M$ if for all $m \in M$, we have that $zm = mz = z$. 
If a monoid contains a zero, then the zero is unique.
Given a monoid $M$ without a zero element, let $M_0 = M \cup \{0_M\}$ be the monoid $M$ appended with a zero element. If $M$ has a zero element, we denote the zero element by $0_M$ and it is the case that $M_0 = M$. In what follows, we assume that the zero is distinct from the unit, for otherwise the monoid is trivial.

\paragraph*{Roots.}
Given an element \(m\) of a monoid \(M\) and a positive integer \(k\), an element~\(n\) of \(M\) is called a \emph{\(k^{\text{th}}\) root} of \(m\) in \(M\) if \(n^k = m\). 
We denote by \(m_k\) a \(k^{\text{th}}\) root of \(m\).
In general, a \(k^{\text{th}}\) root of \(m\) may not exist, or if it exists, it need not be unique.

\paragraph*{Morphisms.}
A \emph{morphism} \(\varphi\) from a monoid $(M, \cdot_{M}, 1_{M})$ to a monoid $(N, \cdot_{N}, 1_{N})$ is a map $\varphi \colon M \to N$ that maps the unit element of \(M\) to the unit element of \(N\) and preserves the monoid operations.
Formally, we have that \(\varphi\) satisfies $\varphi(1_{M}) = 1_{N}$ and for all $m_1, m_2 \in M$, we have that $\varphi(m_1 \cdot_{M} m_2) = \varphi(m_1) \cdot_{N} \varphi(m_2)$.

\paragraph*{Recognizable sets.}
Given monoids $M$ and $N$, we say a subset $S$ of $M$ is \emph{recognized} by the monoid $N$ if there exists a morphism $\varphi \colon M \to N$ and a subset $T \subseteq N$, such that $\varphi^{-1}(T) = S$.
We say $S$ is \emph{recognizable} if it is recognized by a \emph{finite} monoid. 
We denote the set of all recognizable subsets of $M$ by $\text{Rec}(M)$.
We also assume that the recognizing morphisms are surjective.
Recognizable sets exhibit nice closure properties, which we summarize in \Cref{prop:closure-properties-rec}.
The results follow since inverse maps are well-behaved with respect to Boolean operations.
\begin{proposition} [\cite{eilenberg1974automata,pin2010mathematical,sakarovitch2009elements}]
    \label{prop:closure-properties-rec}
	Let $M$ be a monoid, and let $S_1, S_2 \in \text{Rec}(M)$ be recognizable subsets of $M$.
	Then, $S_1 \cup S_2, S_1 \cap S_2, M \setminus S_1 \in \text{Rec}(M)$.
	That is, recognizable sets of a monoid are closed under unions, intersections, and complements.
\end{proposition}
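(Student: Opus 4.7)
The plan is to handle the complement case directly with the same recognizing morphism, and handle union and intersection together by passing to the product monoid. Since the hint in the excerpt already says that the results follow from good behavior of inverse maps under Boolean operations, the proof is really an exercise in bookkeeping, so my proposal is essentially the textbook approach.

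First I would dispatch complement. Suppose $S_1 = \varphi^{-1}(T_1)$ for some morphism $\varphi \colon M \to N$ into a finite monoid $N$ and some $T_1 \subseteq N$. Then, because $\varphi$ is a function, $m \in M \setminus S_1$ iff $\varphi(m) \in N \setminus T_1$, i.e.\ $M \setminus S_1 = \varphi^{-1}(N \setminus T_1)$. So the \emph{same} recognizing morphism and the complemented accepting set witness that $M \setminus S_1 \in \text{Rec}(M)$.

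Next I would handle union and intersection simultaneously via a product construction. Let $S_i = \varphi_i^{-1}(T_i)$ with morphisms $\varphi_i \colon M \to N_i$ into finite monoids $N_i$ and $T_i \subseteq N_i$, for $i \in \{1,2\}$. The product $N_1 \times N_2$ is a monoid under componentwise operation with unit $(1_{N_1}, 1_{N_2})$, and it is finite. I would define $\varphi \colon M \to N_1 \times N_2$ by $\varphi(m) = (\varphi_1(m), \varphi_2(m))$ and check briefly that it is a monoid morphism (units map to units, and componentwise multiplicativity follows from multiplicativity of $\varphi_1$ and $\varphi_2$). Then I would verify the set-theoretic identities
\begin{equation*}
  S_1 \cap S_2 \;=\; \varphi^{-1}(T_1 \times T_2), \qquad
  S_1 \cup S_2 \;=\; \varphi^{-1}\bigl((T_1 \times N_2) \cup (N_1 \times T_2)\bigr),
\end{equation*}
both of which are immediate from the definition of $\varphi$. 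This exhibits $S_1 \cap S_2$ and $S_1 \cup S_2$ as preimages of subsets of a finite monoid under a morphism, so both are recognizable.

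There is no real obstacle here; the only subtlety worth flagging is that the excerpt assumes recognizing morphisms are surjective, and the product morphism $\varphi$ above need not be surjective onto $N_1 \times N_2$. This is easily fixed by restricting the codomain to the image $\varphi(M)$, which is a submonoid of the finite monoid $N_1 \times N_2$ and hence itself finite, and intersecting the accepting sets with $\varphi(M)$. With that cosmetic adjustment the argument is complete.
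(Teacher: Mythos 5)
Your proof is correct and is exactly the standard textbook argument that the paper invokes by citation: the paper itself gives no detailed proof, only the remark that inverse maps behave well under Boolean operations, and your write-up (complement via the complemented accepting set, union and intersection via the product monoid $N_1 \times N_2$) fills in precisely those details. Your closing note on restricting the codomain to $\varphi(M)$ to respect the paper's surjectivity convention is a correct and worthwhile refinement.
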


\section{Additive monoids} \label{sec:additive-monoids}

In this section, we look at the additive monoids of numbers. 
We list in \Cref{tab:additive-monoids-results} the recognizable subsets of the various monoids considered. 
The rest of the section is dedicated to showing these results, as well as some additional properties that apply to a wider class of monoids.

\begin{table}
  \centering
  \caption{Recognizable subsets of additive monoids \((\mathbb{X}, +, 0)\)}\label{tab:additive-monoids-results}
  \begin{tabular}{clcl}
    \hline
    \(\mathbb{X}\) & \(\text{Rec}(\mathbb{X})\) &\quad \(\mathbb{X}\) & \(\text{Rec}(\mathbb{X})\) \\\hline
    \(\mathbb{Z}\) & periodic sets &\quad \(\mathbb{Z}_{\ge 0}\) & ultimately periodic sets \\
    \(\mathbb{Q}\) & \(\{\emptyset, \mathbb{Q}\}\) &\quad \(\mathbb{Q}_{\ge 0}\) & \(\{\emptyset, \mathbb{Q}_{\geq 0}, \{0\}, \mathbb{Q}_{> 0}\}\)\\
    \(\mathbb{R}\) & \(\{\emptyset, \mathbb{R}\}\) &\quad \(\mathbb{R}_{\ge 0}\) & \(\{\emptyset, \mathbb{R}_{\geq 0}, \{0\}, \mathbb{R}_{> 0}\}\)\\
    \(\mathbb{C}\) & \(\{\emptyset, \mathbb{C}\}\)& & \\
    \hline
  \end{tabular}
\end{table}

\subsection{Non-negative integers $(\mathbb{Z}_{\ge 0}, +, 0)$}
A subset $X$ of $\mathbb{Z}_{\ge 0}$ is \emph{ultimately periodic} if there exists a positive integer $p$ (called a period) and  a positive integer $n_0$ such that for all integers $n > n_0$, we have that $n \in X$ if and only if $n + p \in X$. 
The recognizable subsets of \((\mathbb{Z}_{\ge 0}, +, 0)\) are precisely the ultimately periodic sets.

To see this, observe that \(\mathbb{Z}_{\ge 0}\) is isomorphic to the free monoid generated by a unary alphabet.
The recognizable subsets of free monoid with one generator are exactly the unary regular languages.
Thus, the recognizable subsets of $(\mathbb{Z}_{\ge 0}, +, 0)$ are the ultimately periodic sets. 
A detailed exposition is found in \cite{eilenberg1974automata}.

\subsection{Integers $(\mathbb{Z}, +, 0)$}
A subset $X$ of $\mathbb{Z}$ is \emph{periodic} if there exists a positive integer $p$ (called a period) such that for all $n \in \mathbb{Z}$, we have that $n \in X$ if and only if $n + p \in X$. 
The recognizable subsets of $(\mathbb{Z}, +, 0)$ are the periodic sets. 
We show this now.

The monoid $(\mathbb{Z}, +, 0)$ is a group, and since surjective morphisms map groups to groups, we have that any recognizable subset of \(\mathbb{Z}\) must be recognized by a finite group.
Let $G$ be a finite group and let $\varphi \colon \mathbb{Z} \to G$ be a morphism. 
This morphism is completely determined by $\varphi(1)$. 
Indeed, if $\varphi(1) = g$ for some $g \in G$, then 
we have $\varphi(0) = 1_G$, 
we have $\varphi(n) = g^n$ for all positive integers $n$, 
and we have $\varphi(n) = (g^{-1})^{|n|}$ for all negative integers $n$, where $g^{-1}$ is the inverse of $g$.
If $g$ has order $p$ in $G$, then for all subsets $H$ of $G$, we have that $\varphi^{-1}(H)$ is a periodic set with period $p$.

\subsection{Non-negative rationals $(\mathbb{Q}_{\geq 0}, +, 0)$}

To find the recognizable subsets of non-negative rationals, we show and use an observation that is applicable for a wider class of monoids.
We define a property of monoids, namely monoids that are \emph{arbitrarily divisible} and characterize their recognizable subsets. 
The special case of non-negative reals under addition was proven in \cite{dima2001algebraic}. 
Here, we generalize it for monoids with this property.

\begin{definition}
    A monoid $M$ is \emph{arbitrarily divisible} if for all $m \in M$ and positive integer $k$, there exists a \(k^{\text{th}}\) root of \(m\) in \(M\), that is, there exists $m_k \in M$ such that $(m_k)^k = m$.
  \label{def:arbitrarily-divisible}
\end{definition}

The following proposition shows that the image \(\varphi(M)\) of an arbitrarily divisible monoid \(M\) under a morphism \(\varphi\) 
is an idempotent monoid.
\begin{proposition}
	If $M$ is an arbitrarily divisible monoid, $N$ is a finite monoid, and $\varphi: M\to N$ is a morphism, then \(\varphi(M)\) is an idempotent monoid.
\label{prop:arbitrarily-divisible-monoid}
\end{proposition}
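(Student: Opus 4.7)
The plan is to prove that every element of $\varphi(M)$ is idempotent; since $\varphi(M)$ is a submonoid of $N$ (as $\varphi$ is a morphism), this is exactly what we need. So fix an arbitrary $n \in \varphi(M)$ and write $n = \varphi(m)$ for some $m \in M$. The goal becomes showing $n^2 = n$.

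The first step is to invoke a standard fact about finite monoids: there exists a positive integer $q$ (a uniform exponent) such that for every $x \in N$, the element $x^q$ is idempotent. This holds because, for each $x \in N$, the sequence $x, x^2, x^3, \ldots$ is eventually periodic with some index $i(x)$ and period $p(x)$, and any sufficiently large common multiple of the $p(x)$'s that also exceeds every $i(x)$ works. Finiteness of $N$ makes this a finite choice, so such $q$ exists.

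The second step is to combine this with arbitrary divisibility. By \Cref{def:arbitrarily-divisible}, there exists $m_q \in M$ with $(m_q)^q = m$. Applying the morphism $\varphi$ gives
\[
    \varphi(m_q)^q \;=\; \varphi\bigl((m_q)^q\bigr) \;=\; \varphi(m) \;=\; n.
\]
Since $\varphi(m_q) \in N$, the uniform exponent property guarantees that $\varphi(m_q)^q$ is idempotent, hence $n$ is idempotent. As $n$ was an arbitrary element of $\varphi(M)$, every element of $\varphi(M)$ is idempotent, and so $\varphi(M)$ is an idempotent monoid.

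I do not expect a genuinely hard step here. The only conceptual ingredient is the uniform-exponent fact about finite monoids, which is folklore in semigroup theory; if needed, it can be replaced by the weaker per-element statement (for each $x \in N$, some power $x^{q_x}$ is idempotent), applied to the single element $\varphi(m)$ by choosing a $q_x$-th root of $m$ in $M$. The arbitrary divisibility hypothesis is doing the essential work of producing a root whose image is guaranteed to land in the idempotent part of the powers of some element of $N$.
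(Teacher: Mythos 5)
Your proof is correct and takes essentially the same approach as the paper: the paper instantiates your uniform-exponent fact with $q = |N|!$, taking an $(|N|!)^{\text{th}}$ root of $m$ and proving inline (via the pigeonhole argument on powers of its image $n$, plus the existence of an idempotent power $n^k$) that $n^{|N|!} = \varphi(m)$ is idempotent. The only difference is presentational --- you factor out the folklore lemma about a uniform idempotent exponent in finite monoids, while the paper carries out that argument within the proof itself.
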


\begin{proof}
  Let $m \in M$. 
  We want to show that for all \(m \in M\) that \(\varphi(m)\) is an idempotent element, that is, \(\varphi(m)^2 = \varphi(m)\).
  If \(\varphi(m) = 1_N\), then we are done since the unit element \(1_N\) is an idempotent.
  Thus, the only interesting case is when $\varphi(m) \neq 1_N$.
  Let \(n \in N\) be the image of a $(|N|!)^{\text{th}}$ root of \(m\), that is, $\varphi\left(m_{|N|!}\right) = n$. Consider the sequence:
  \begin{equation*}
    n, n^2, n^3, \ldots n^{|N| + 1}
  \end{equation*}
  By the pigeonhole principle, there exist at least two elements in this sequence that are equal, say $n^i = n^j$ where $1 \le i < j \le |N|+1$. There is an idempotent $n^k$ where $i \le k < j$ and $(j-i) \mid k$ \cite{eilenberg1974automata}. Let $k' = \frac{|N|!}{k}$. Then we have $\varphi(m) = n^{|N|!} = n^{k \cdot k'} = (n^k)^{k'} = n^k$, which is an idempotent.
  \qed
\end{proof}

\begin{remark}
  Let $M$ be an arbitrarily divisible monoid. 
  From \Cref{prop:arbitrarily-divisible-monoid}, 
  it follows that for all \(m \in M\), it is the case that \(\varphi(m)\) is idempotent, and thus for all 
  positive integers $p$, we have that $ \varphi(m^p) = \varphi(m)$. 
  Moreover, for every $m \in M$, for all positive integers $q$, let $m_q$ denote a \(q^{\text{th}}\) root of \(m\), that is, $(m_q)^q = m$. 
  Then, again, from \Cref{prop:arbitrarily-divisible-monoid}, we have that $\varphi(m_q)$ is an idempotent, and thus $\varphi(m_q) = \varphi(m_q)^q = \varphi((m_q)^q) = \varphi(m)$. 
  Thus, for all positive integers $p, q$, we have $ \varphi((m_q)^p) = \varphi(m)^p = \varphi(m)$.
\label{rem:arbitrarily-divisible-monoid-rationals}
\end{remark}

Note that $(\mathbb{Q}_{\geq 0}, +, 0)$ is an arbitrarily divisible monoid since for every non-negative rational number $r$ and every positive integer $k$, there exists a non-negative rational number that is a $k^{\text{th}}$ root of $r$, namely $\frac{r}{k}$.
From \Cref{rem:arbitrarily-divisible-monoid-rationals}, it follows that for all positive rational numbers $r$, and for all morphisms to finite monoids, $\varphi: \mathbb{Q}_{\geq 0} \to M$, it is the case that $\varphi(r) = \varphi(1)$.
Thus, we have that the image of a surjective $\varphi$ has at most two elements $\varphi(0)$ and $\varphi(1)$, and we get that $\text{Rec}(\mathbb{Q}_{\geq 0}, +) = \{\emptyset, \{0\}, \mathbb{Q}_{> 0}, \mathbb{Q}_{\geq 0}\}$, since these sets are 
$\varphi^{-1}(\emptyset)$,
$\varphi^{-1}(\{\varphi(0)\})$,
$\varphi^{-1}(\{\varphi(1)\})$, and
$\varphi^{-1}(\{\varphi(0), \varphi(1)\})$ respectively.

\subsection{Rationals $(\mathbb{Q}, +, 0)$ and beyond}

Now, we look at the monoids $(\mathbb{Q}, +, 0)$, $(\mathbb{R}, +, 0)$, and $(\mathbb{C}, +, 0)$.
These are all arbitrarily divisible groups. 
The following proposition shows that the only recognizable subsets of arbitrarily divisible groups are the trivial sets.

\begin{proposition}
    If $M$ is an arbitrarily divisible group then $	\text{Rec}(M) = \{ \emptyset, M \}$.
\label{prop:Rec-arbitrarily-divisible-group-trivial}
\end{proposition}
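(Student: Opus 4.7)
The plan is to combine two observations: the image of an arbitrarily divisible monoid under any morphism is idempotent (\Cref{prop:arbitrarily-divisible-monoid}), and the image of a group under a morphism is a group. Putting these together forces the image to be trivial, which immediately pins down the recognizable subsets.

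More concretely, let $\varphi \colon M \to N$ be a surjective morphism to a finite monoid $N$, and let $S = \varphi^{-1}(T) \in \text{Rec}(M)$ for some $T \subseteq N$. First I would note that $N = \varphi(M)$ is a group: for any $n = \varphi(m) \in N$, the element $\varphi(m^{-1})$ serves as an inverse since $\varphi$ is a monoid morphism and $m^{-1}$ exists in the group $M$. Next, by \Cref{prop:arbitrarily-divisible-monoid}, every element of $N$ is idempotent. But in a group the only idempotent is the identity (multiplying $e^2 = e$ by $e^{-1}$ gives $e = 1_N$), so $N = \{1_N\}$.

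Once $N$ is the trivial group, the only subsets $T \subseteq N$ are $\emptyset$ and $\{1_N\}$, with preimages $\emptyset$ and $M$ respectively. Hence $\text{Rec}(M) \subseteq \{\emptyset, M\}$, and the reverse inclusion is obvious since $\emptyset$ and $M$ are always recognized by the trivial morphism into the one-element monoid.

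The argument is short and there is no real obstacle; the substantive content was already absorbed into \Cref{prop:arbitrarily-divisible-monoid}. The only thing to be slightly careful about is the surjectivity convention on recognizing morphisms (which the paper has already fixed in \Cref{sec:prelims}), since without it one would have to replace $N$ by $\varphi(M)$ throughout, though the conclusion is unchanged.
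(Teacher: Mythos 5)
Your proof is correct, but it reaches the conclusion by a different route than the paper. The paper's proof does not invoke \Cref{prop:arbitrarily-divisible-monoid} at all: it fixes $m \in M$, lets $k$ be the order of the group $\varphi(M)$, takes a $k^{\text{th}}$ root $m_k$ of $m$, and concludes directly that $\varphi(m) = \varphi\left((m_k)^k\right) = \varphi(m_k)^k = 1_N$, using the group-theoretic fact (Lagrange) that every element of a finite group of order $k$ raised to the power $k$ is the identity. You instead factor the argument through \Cref{prop:arbitrarily-divisible-monoid}: the image is an idempotent monoid, it is also a group since $M$ is, and a group's unique idempotent is its identity, so the image is trivial. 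Both arguments are sound and about equally short. Your version has the appeal of exhibiting the statement as an immediate corollary of machinery the paper has already established, with no fresh computation; the paper's version is self-contained and sidesteps the pigeonhole-based idempotency argument underlying \Cref{prop:arbitrarily-divisible-monoid} in favor of a simpler fact about finite groups. Your explicit treatment of the endpoints --- verifying the reverse inclusion $\{\emptyset, M\} \subseteq \text{Rec}(M)$ via the trivial monoid, and flagging the surjectivity convention --- covers details the paper leaves implicit.
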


\begin{proof}
We shall show that all morphisms $\varphi:M \rightarrow N$, where $N$ is a finite monoid, maps every element $m$ in \(M\) to the unit $1_{N}$ in $N$. 
To see this, observe that since $M$ is a group, $\varphi(M)$ is also a group. 
Let $k$ be the order of the group $\varphi(M)$. 
Since $M$ is arbitrarily divisible, there exists $m_{k}$ in $M$, a \(k^{\text{th}}\) root of \(m\). 
Thus, we have $\varphi(m) = \varphi((m_{k})^{k}) = {\varphi(m_{k})}^{k} = 1_{N}$.
  \qed
\end{proof}

\subsection{Non-negative reals $(\mathbb{R}_{\geq 0}, +, 0)$}

Let $\varphi: \mathbb{R}_{\geq 0} \to M$ be a surjective morphism to a finite monoid. Since $(\mathbb{R}_{\geq 0}, +, 0)$ is a commutative arbitrarily divisible monoid, by \ref{prop:arbitrarily-divisible-monoid} $M$ is a commutative idempotent monoid. Moreover by \Cref{rem:arbitrarily-divisible-monoid-rationals}, for each $q \in \mathbb{Q}_{> 0}$ we have $\varphi(q) = \varphi(1)$.

 Let $r$ be a positive irrational number. There exists $p, q \in \mathbb{Q}_{> 0}$ such that $p < r < q$. Then, $ \varphi(q) = \varphi(r) \varphi(q-r) = \varphi(2r) \varphi(q- r) = \varphi(r) \varphi(q) $. We also have that $ \varphi(r) = \varphi(p)\varphi(r-p) = \varphi(2p) \varphi(r-p) = \varphi(r)\varphi(p) = \varphi(r) \varphi(q) = \varphi(q)$. Thus every irrational maps to $\varphi(1)$ as well.

Therefore, there are only two possible choices of $\varphi$, either for all $r \in \mathbb{R}_{\geq 0},\ \varphi(r) = 1_M$, or for all $r \in \mathbb{R}_{>0}, \varphi(r) = \varphi(1)$ and $\varphi(0) = 1_M$. It follows that $\text{Rec}(\mathbb{R}_{\geq 0}, +, 0) = \{\emptyset, \{0\},  \mathbb{R}_{> 0}, \mathbb{R}_{\geq 0}\}$.

\section{Multiplicative monoids} \label{sec:multiplicative-monoids}

We are interested in finding the recognizable subsets of the multiplicative monoids \(\mathbb{Z}\), \(\mathbb{Q}\), \(\mathbb{R}\), and \(\mathbb{C}\).
We will first recall some observations that will simplify our study.
The first observation relates recognizable subsets of monoids with a zero element to recognizable subsets of monoids without a zero element.

\begin{proposition}
  Let $M$ be a monoid without a zero. Then $\text{Rec}(M_0)= \text{Rec}(M) \cup \{R \cup \{0_M\} \mid R \in \text{Rec}(M)\}$.
  \label{prop:monoids-with-zero-rec}
\end{proposition}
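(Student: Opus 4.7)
The plan is to prove the equality by showing both inclusions, treating the right-hand side as a union of two families indexed by the recognizable sets of $M$.

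For the inclusion $\supseteq$, I start from an $R \in \text{Rec}(M)$ with recognizing morphism $\varphi \colon M \to N$ (for some finite $N$) and set $T \subseteq N$ with $R = \varphi^{-1}(T)$. I adjoin a fresh zero to $N$ to form the finite monoid $N_0 = N \cup \{0_N\}$, and extend $\varphi$ to $\varphi_0 \colon M_0 \to N_0$ by $\varphi_0(m) = \varphi(m)$ for $m \in M$ and $\varphi_0(0_M) = 0_N$. A quick check that $\varphi_0$ is a morphism reduces to verifying that $0_M$ absorbs everything on both sides and that its image $0_N$ does the same, which holds by construction. Then $\varphi_0^{-1}(T) = R$ shows $R \in \text{Rec}(M_0)$, while $\varphi_0^{-1}(T \cup \{0_N\}) = R \cup \{0_M\}$ shows $R \cup \{0_M\} \in \text{Rec}(M_0)$.

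For the inclusion $\subseteq$, I start from $S \in \text{Rec}(M_0)$, witnessed by a morphism $\psi \colon M_0 \to N$ and a subset $T \subseteq N$ with $S = \psi^{-1}(T)$. I define $R = S \cap M$, so that $S$ is either $R$ (if $0_M \notin S$) or $R \cup \{0_M\}$ (if $0_M \in S$). It then suffices to exhibit $R$ as a recognizable subset of $M$. For this I restrict $\psi$ to obtain $\psi|_M \colon M \to N$, which is a morphism since $M$ is a submonoid of $M_0$ and $1_{M_0} = 1_M$. Since $R = (\psi|_M)^{-1}(T)$, we have $R \in \text{Rec}(M)$; to conform to the surjectivity convention, one can replace the codomain $N$ by the submonoid $\psi(M) \subseteq N$, which is still finite.

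The two inclusions together yield the claimed equality. I do not expect a genuine obstacle here: the only care point is the bookkeeping around the adjoined zero in the $\supseteq$ direction (making sure that $N_0$ is a genuine monoid and that $\varphi_0$ is a morphism, particularly when $N$ already contains a zero element, in which case the adjoined $0_N$ is a new, distinct element whose orbit under $\varphi_0$ is exactly $\{0_M\}$), and the restriction-to-submonoid step in the $\subseteq$ direction. Both are routine once one is explicit about which zero is which.
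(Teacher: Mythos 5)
Your proof is correct, and on the inclusion $\subseteq$ it coincides with the paper's own argument: restrict the recognizing morphism $\psi \colon M_0 \to N$ to $M$, observe that the preimage of $T$ under the restriction is $S \cap M$, and conclude that $S$ is either $R$ or $R \cup \{0_M\}$ for some $R \in \text{Rec}(M)$. The genuine difference is that the paper stops there: its proof establishes only that every recognizable subset of $M_0$ has the stated form, and never argues the converse inclusion, namely that for every $R \in \text{Rec}(M)$ both $R$ and $R \cup \{0_M\}$ are recognizable in $M_0$. You supply exactly this missing half, by adjoining a \emph{fresh} zero to the recognizing monoid $N$ and extending the morphism by $0_M \mapsto 0_N$; this preserves finiteness and surjectivity, and $\varphi_0^{-1}(T)$ versus $\varphi_0^{-1}(T \cup \{0_N\})$ then yields $R$ and $R \cup \{0_M\}$ respectively. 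Your insistence that the adjoined zero be new even when $N$ already contains one is the right precaution and not mere pedantry: reusing an existing zero $z$ of $N$ would conflate $0_M$ with the elements of $\psi^{-1}(z) \subseteq M$ (nonempty by surjectivity), so one could not in general recognize $R$ and $R \cup \{0_M\}$ separately. Conversely, you omit the paper's opening observation that any surjective morphism $\varphi \colon M_0 \to N$ sends $0_M$ to a zero of $N$; this is harmless, since that fact is not actually used in the restriction argument and serves only as a structural remark. Net effect: same core technique, but your write-up is the more complete of the two.
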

\begin{proof} 
  If $\varphi: M_0 \to N$ is a morphism to a finite monoid $N$, then $\varphi(0_M) = 0_N$. Indeed, for any $n \in N$ there exists $m \in \varphi^{-1}(n)$ (since we assume all recognizing morphisms are surjective), and $\varphi(0_M) \cdot n = \varphi(0_M) = n \cdot \varphi(0_M)$. Let $X = \varphi^{-1}(P)$ for some $P \subseteq N$ and $\psi:M \to N$ be the morphism defined by $\psi(m) = \varphi(m)$ for all $m \in M$. Then if $0_M \in X$ then $X \setminus \{0\}$ is recognized by $\psi$ and if $0_M \notin X$ then $X$ is recognized by $\psi$.
  \qed
\end{proof}
This helps us reduce our problem to finding the recognizable subsets of the multiplicative monoids 
\(\mathbb{Z} \setminus \{0\}\),
\(\mathbb{Q} \setminus \{0\}\),
\(\mathbb{R} \setminus \{0\}\), and
\(\mathbb{C} \setminus \{0\}\).
The next observation is concerned with the recognizable subsets of a direct product of monoids.

\begin{theorem}
  [Mezei \cite{pin2010mathematical}] Let $M_1,\ldots, M_n$ be monoids and let $M = M_1 \times \cdots \times
		M_n$. A subset of M is recognizable if and only if it is a finite union of subsets
	of the form $R_1 \times \cdots \times R_n$, where each $R_i$ is a recognizable subset of $M_i$.
 \label{thm:mezei}
\end{theorem}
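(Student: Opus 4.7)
\emph{Approach.} The plan is to prove both implications separately: the ``if'' direction is a direct closure argument, while the ``only if'' direction requires unfolding a single recognizing morphism of the product into rectangles via the coordinate injections.

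\emph{Easy direction.} Suppose each $R_i \subseteq M_i$ is recognized by a morphism $\varphi_i \colon M_i \to N_i$ into a finite monoid with accepting set $T_i \subseteq N_i$. First I would form the componentwise morphism $\varphi \colon M \to N_1 \times \cdots \times N_n$ defined by $\varphi(m_1, \ldots, m_n) = (\varphi_1(m_1), \ldots, \varphi_n(m_n))$; its codomain is finite, and $\varphi^{-1}(T_1 \times \cdots \times T_n) = R_1 \times \cdots \times R_n$, so every single rectangle is recognizable. A finite union of such rectangles is then recognizable by \Cref{prop:closure-properties-rec}.

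\emph{Hard direction.} Suppose $S = \varphi^{-1}(T)$ for some morphism $\varphi \colon M \to N$ into a finite monoid $N$. The key is to pull $\varphi$ back along the coordinate injections $\iota_i \colon M_i \to M$ sending $m_i$ to the tuple with $m_i$ in position $i$ and unit elements elsewhere. Setting $\varphi_i = \varphi \circ \iota_i \colon M_i \to N$, each tuple $m = (m_1, \ldots, m_n)$ factorises in $M$ as $m = \iota_1(m_1) \cdots \iota_n(m_n)$, so $\varphi(m) = \varphi_1(m_1) \cdots \varphi_n(m_n)$. Therefore
\begin{equation*}
S \;=\; \bigcup_{\substack{(a_1, \ldots, a_n) \in N^n \\ a_1 \cdots a_n \in T}} \varphi_1^{-1}(a_1) \times \cdots \times \varphi_n^{-1}(a_n),
\end{equation*}
a finite union (as $N^n$ is finite) of rectangles, where each $\varphi_i^{-1}(a_i)$ is recognizable in $M_i$ via $\varphi_i$ corestricted to its image.

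\emph{Main obstacle.} The delicate step is the factorisation $m = \iota_1(m_1) \cdots \iota_n(m_n)$, which rests crucially on the direct-product structure: elements supported on distinct coordinates commute in $M$, so the product above is well-defined regardless of the order in which the factors are multiplied. I would state this commutation explicitly, since the argument would break for a free product or any other non-direct combination. The only remaining bookkeeping is to observe that $\varphi_i(M_i)$ is a submonoid of $N$ and hence finite, so that singleton preimages under $\varphi_i$ are genuinely recognized by a surjective morphism into a finite monoid, as required by the convention fixed in \Cref{sec:prelims}.
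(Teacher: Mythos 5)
Your proof is correct, both directions: the product-of-morphisms argument for the ``if'' direction and the decomposition of $\varphi^{-1}(T)$ into rectangles $\varphi_1^{-1}(a_1) \times \cdots \times \varphi_n^{-1}(a_n)$ indexed by tuples with $a_1 \cdots a_n \in T$ for the ``only if'' direction, including the care taken with the surjectivity convention. The paper itself states this theorem without proof, citing Pin; your argument is essentially the standard proof given in that reference, so there is nothing to compare beyond noting that your route is the canonical one.
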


\begin{remark}
    The monoids $(\mathbb{Z} \setminus \{0\}, \times, 1), (\mathbb{Q} \setminus \{0\}, \times, 1),(\mathbb{R}\setminus \{0\}, \times , 1)$ are all isomorphic to the direct products of $(\mathbb{Z}_{> 0}, \times, 1), (\mathbb{Q}_{> 0}, \times, 1),(\mathbb{R}_{> 0}, \times, 1)$ respectively with $\mathbb{Z}_2$. Note that $\text{Rec}(\mathbb{Z}_2) = 2^{\{0,1\}}$. \Cref{thm:mezei} thus gives us the recognizable subsets of these monoids from the recognizable subsets of the corresponding ``positive'' monoids.
  \label{rem:Rec-multiplicative-monoids-with-negatives}
\end{remark}

\begin{remark}
  Note that $(\mathbb{R}_{> 0}, \times, 1)$ and $(\mathbb{C} \setminus \{0\}, \times, 1)$ are both arbitrarily divisible groups and hence by \Cref{prop:Rec-arbitrarily-divisible-group-trivial} their recognizable subsets are trivial. By \Cref{rem:Rec-multiplicative-monoids-with-negatives} we have
  \begin{align*}
    \text{Rec}((\mathbb{R}, \times, 1)) &= \{\emptyset, \{0\}, \mathbb{R}_{<0}, \mathbb{R}_{\le 0}, \mathbb{R}_{>0}, \mathbb{R}_{\ge 0}, \mathbb{R} \setminus \{0\}, \mathbb{R}\}\\
    \text{Rec}((\mathbb{C}, \times, 1)) &= \{\emptyset, \{0\},  \mathbb{C}  \setminus \{0\}, \mathbb{C}\}.
  \end{align*}
  \label{rem:rec-reals-and-complex-under-multipication}
\end{remark}

In the rest of this section, we work towards finding the recognizable subsets of the multiplicative monoids $\mathbb{Z}_{>0}$ and $\mathbb{Q}_{>0}$.
We shall look at recognizable subsets of (countable) infinitely generated free monoids. Indeed naturals and rationals with multiplication are merely the commutative infinitely generated free monoid and the commutative infinitely generated free group respectively. We show that a recognizing morphism can be ``factorized'' into two morphisms via a finitely generated free monoid that also recognizes the set.

\begin{lemma}
  Let $A$ be an infinite alphabet and $\varphi: A^* \to M$ be a morphism to a finite monoid $M$. Then, there exists an equivalence relation $\sim$ on $A$ of finite index $k$ and a morphism $\psi: (A/\sim)^* \to M$ such that $\varphi = \psi \circ \pi_\sim$ where $\pi_\sim: A^* \to (A/\sim)^*$ is the projection morphism of $\sim$. 
  \label{lem:infinitely-generated-rec-morphism}
\end{lemma}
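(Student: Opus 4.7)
The plan is to let the morphism $\varphi$ itself dictate how to collapse $A$. Since $M$ is finite and $A$ is infinite, the restriction $\varphi|_A$ can take at most $|M|$ distinct values, so defining $a \sim b$ iff $\varphi(a) = \varphi(b)$ gives an equivalence relation on $A$ whose index $k$ is at most $|M|$. In particular, $A/\sim$ is a finite alphabet, and $(A/\sim)^*$ is a finitely generated free monoid, as desired.

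Next I would define $\psi$ on generators by $\psi([a]) = \varphi(a)$ for each class $[a] \in A/\sim$. This is well-defined precisely because $\sim$ was designed from $\varphi$: if $a \sim b$ then $\varphi(a) = \varphi(b)$. By the universal property of the free monoid $(A/\sim)^*$, this map on generators extends uniquely to a monoid morphism $\psi: (A/\sim)^* \to M$. Similarly, $\pi_\sim$ is the unique morphism $A^* \to (A/\sim)^*$ obtained by extending $a \mapsto [a]$ on generators.

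Finally I would verify the factorization $\varphi = \psi \circ \pi_\sim$. Both sides are morphisms $A^* \to M$, so it suffices to check equality on the generating set $A$. For any $a \in A$, $(\psi \circ \pi_\sim)(a) = \psi([a]) = \varphi(a)$ by construction, so the two morphisms agree on $A$ and therefore on all of $A^*$.

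There is no real obstacle here: the only place one must be careful is confirming that the equivalence classes are the right objects to treat as letters of a new free monoid, and that the definition of $\psi$ on classes does not depend on the chosen representative. Both issues are immediate from the definition of $\sim$, so the argument is essentially the pigeonhole observation packaged via the universal property of free monoids.
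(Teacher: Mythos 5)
Your proposal is correct and follows essentially the same route as the paper: defining $a \sim a'$ iff $\varphi(a) = \varphi(a')$, setting $\psi([a]_\sim) = \varphi(a)$, and verifying the factorization $\varphi = \psi \circ \pi_\sim$ on generators (the paper checks it letter-by-letter on an arbitrary word, which is the same computation). Your explicit appeal to the universal property of the free monoid is a slightly cleaner packaging of the paper's direct verification, but there is no substantive difference.
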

\begin{proof}
  We define the equivalence relation $\sim$, for all $a, a' \in A$, $a \sim a'$ iff $\varphi(a) = \varphi(a')$. Note that $\sim$ has a finite index. We define the morphism $\psi: (A/\sim)^* \to M$ by $\psi([a]_\sim) = \varphi(a)$ for $a \in A$. It is clearly well-defined and is a morphism.
  We now show that $\varphi = \psi \circ \pi_\sim$.
  Let $w = a_1a_2\dots a_p \in A^*$. Then,
  \begin{align*}
    \psi(\pi_\sim(a_1 \dots a_p)) &= \psi([a_1]_\sim [a_2]_\sim \dots [a_p]_\sim) \\
    &= \varphi(a_1)\varphi(a_2) \dots \varphi(a_p) = \varphi(w).
  \end{align*}
  \qed
\end{proof}

\begin{proposition}
    If $X \in \text{Rec}(A^*)$ then, there exists an equivalence relation $\sim$ on $A$ of finite index $k$ such that $X = (\pi_\sim^{-1} \circ \pi_\sim)(X)$ where $\pi_\sim: A^* \to (A/\sim)^*$ is the projection morphism of $\sim$.
  \label{prop:pi-tilde-rec}
\end{proposition}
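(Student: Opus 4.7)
The plan is to obtain the desired $\sim$ directly from \Cref{lem:infinitely-generated-rec-morphism} and then verify the set equality by showing that $X$ is saturated by the kernel of $\pi_\sim$. Since $X \in \text{Rec}(A^*)$, there is a surjective morphism $\varphi : A^* \to M$ to a finite monoid and a subset $T \subseteq M$ with $X = \varphi^{-1}(T)$. I would apply \Cref{lem:infinitely-generated-rec-morphism} to $\varphi$ to obtain the equivalence relation $\sim$ on $A$ of finite index together with a morphism $\psi : (A/\!\sim)^* \to M$ satisfying $\varphi = \psi \circ \pi_\sim$. This $\sim$ is the one I would claim works.

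Next I would establish the equality $X = (\pi_\sim^{-1} \circ \pi_\sim)(X)$ by the two usual inclusions. The inclusion $X \subseteq \pi_\sim^{-1}(\pi_\sim(X))$ is immediate and holds for any map. For the converse, let $w \in \pi_\sim^{-1}(\pi_\sim(X))$, so there exists $w' \in X$ with $\pi_\sim(w) = \pi_\sim(w')$. Using the factorization from the lemma, I compute
\begin{equation*}
\varphi(w) = \psi(\pi_\sim(w)) = \psi(\pi_\sim(w')) = \varphi(w') \in T,
\end{equation*}
since $w' \in X = \varphi^{-1}(T)$. Hence $w \in \varphi^{-1}(T) = X$, giving the reverse inclusion.

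There is no substantive obstacle here: the statement is essentially a restatement of the factorization lemma in terms of the saturation operator $\pi_\sim^{-1} \circ \pi_\sim$. The only subtlety worth mentioning is the surjectivity assumption on $\varphi$ (which we have adopted globally in the preliminaries), and the fact that $\sim$ has finite index is inherited verbatim from \Cref{lem:infinitely-generated-rec-morphism}. Thus the proof amounts to unpacking the factorization $\varphi = \psi \circ \pi_\sim$ and observing that any preimage of a set under such a composition is automatically a union of $\sim$-classes lifted to $A^*$.
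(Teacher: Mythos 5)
Your proof is correct and takes essentially the same approach as the paper's: both invoke \Cref{lem:infinitely-generated-rec-morphism} to factor $\varphi = \psi \circ \pi_\sim$ and then verify the nontrivial inclusion by computing $\varphi(w) = \psi(\pi_\sim(w)) = \psi(\pi_\sim(w')) = \varphi(w')$ for $w' \in X$ with the same projection. The only cosmetic difference is that the paper frames the reverse inclusion as a proof by contradiction, whereas you argue it directly and make the role of the accepting set $T$ explicit.
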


\begin{proof}
  Let $X$ be recognized by the surjective morphism $\varphi: A^* \to M$ where $M$ is a finite monoid. We have by \Cref{lem:infinitely-generated-rec-morphism} that there exists an equivalence relation $\sim$ on $A$ of finite index $k$ and a morphism $\psi: (A/\sim)^* \to M$ such that $\varphi = \psi \circ \pi_\sim$ where $\pi_\sim: A^* \to (A/\sim)^*$ is the projection morphism of $\sim$.
  Clearly, $X \subseteq (\pi_\sim^{-1} \circ \pi_\sim)(X)$. Assume on the contrary there is some $w \in (\pi_\sim^{-1} \circ \pi_\sim)(X) \setminus X$. 
  Then there exists $w' \in X$ such that $\pi_\sim(w) = \pi_\sim(w')$. 
  Then, $\varphi(w') = \psi(\pi_\sim(w')) = \psi(\pi_\sim(w)) = \varphi(w)$, and thus $w \in X$ which is a contradiction.
  \qed
\end{proof}
We would like to mention a way to define $\text{Rec}(A^*)$ in monadic-second order logic using \Cref{prop:pi-tilde-rec}. Given $X \in \text{Rec}(A^*)$, let $\sim$ be the equivalence relation of finite index $k$ given by \Cref{prop:pi-tilde-rec}. Now by \Cref{lem:infinitely-generated-rec-morphism}, $\pi_\sim(X) \in \text{Rec}((A/\sim)^*)$, and therefore $\pi_\sim(X)$ is a model for a monadic-second order (MSO) formula with first-order variables interpreted as positions in the word and second-order variables interpreted as sets of position, and unary predicates $R_b$ for each $b \in A/\sim$ ($R_b(i)$ if and only if the $i^\text{th}$ letter is $b$) and binary predicate $<$ ($i<j$ if position $i$ appears before position $j$)~\cite{pin2010mathematical}. The set $X$ is a model for this MSO formula if the predicates $R_b$ are interpreted as $R_b(i)$ if and only if the $i^{th}$ letter is $a$ such that $\pi_\sim(a) = b$. This is unsatisfactory since this is not a uniform characterization and much of the computational difficulty is hidden in the interpretation of the predicates $R_b$. Indeed, the predicates $R_b$ may not even be computable as the following remark shows.

\begin{remark}
    If $A$ is a countably infinite alphabet then $\text{Rec}(A^*)$ is uncountable. Consider the monoid $M_3 = \langle \{0_M, 1_M, p\} \mid p^2 = 0_M \rangle$ where $0_M$ is the zero, $1_M$ is the unit. Any $X \subseteq A$ is recognized by the morphism $\varphi: A^* \to M_3$ defined by $\varphi(a) = p$ for $a \in X$ and $\varphi(a) = 0_M$ otherwise. Thus, one cannot hope for a characterization by a finitely describable machine or logic.
    \label{rem:rec-infinite-generated-free-monoid-uncountable}
\end{remark}

Despite the large number of recognizable subsets, we show that many seemingly easy-to-compute subsets of naturals and rationals under multiplication are not recognizable.

\subsection{Positive integers $(\mathbb{N} \setminus \{0\}, \times, 1)$}

Working directly with $(\mathbb{N} \setminus \{0\}, \times, 1)$ is a bit difficult, so we choose instead to work with an isomorphic monoid, the monoid of finite sequences of naturals that comes from the powers of the primes in a natural number's prime factorization.

Let $\sigma$ and $\tau$ be finite sequences of non-negative integers of length $m$ and $n$ respectively. Without loss of generality we assume that $m \le n$. The pointwise addition of $\sigma$ and $\tau$, $\sigma + \tau$ is the sequence of length $n$ defined by
\begin{equation*}
    (\sigma + \tau)(i) = 
    \begin{cases}
        \sigma(i) + \tau(i) &\text{if } i \le m, \\
        \tau(i) &\text{otherwise}
    \end{cases}
\end{equation*}
Finite sequences of non-negative integers form a monoid under pointwise addition with the sequence of length 0 as the unit element. We define for each $n \in \mathbb{N}$, the sequence of non-negative integers of length $n$, $\sigma_n$ where for each $1 \le i < n, \sigma_n(i) = 0$ and $\sigma_n(n) = 1$. Let $\sigma_0$ be the sequence of length 0 and $N = \{\sigma_n\}_{n \in \mathbb{N}}$. $N$ generates the monoid of finite sequences of non-negative integers under pointwise addition and thus we denote the monoid by $N^*$.
 
It is easy to see that $(\mathbb{N} \setminus \{0\}, \times, 1)$ is isomorphic to $N^*$. Indeed, by the fundamental theorem of arithmetic, one can associate each natural with the sequence of powers in its unique prime factorization. Thus, we shall focus on $\text{Rec}(N^*)$. 

Let $\varphi: A^* \to M$ be a morphism to a finite monoid $M$. . Then, by \Cref{lem:infinitely-generated-rec-morphism} there exists an equivalence relation $\sim$ on $N$ of finite index $k$ and a morphism $\psi: (N/\sim)^* \to M$ such that $\varphi = \psi \circ \pi_\sim$ where $\pi_\sim: N^* \to (N/\sim)^*$ is the projection morphism of $\sim$. 

By the commutativity of $N^*$, the monoid $(N/\sim)^*$ is a commutative free monoid generated by $k$ elements and is hence isomorphic to $\mathbb{N}^k$ ($k$-tuples of non-negative integers) under pointwise addition \cite{sakarovitch2009elements}. Fix some enumeration $N/\sim = \{\alpha_1, \dots \alpha_k\}$. We then have
\begin{equation*}
  \pi_\sim(\sigma) = \Big(\sum\limits_{i \in \alpha_j} \sigma(i)\Big)_{j \in [k]}
\end{equation*}
Hereafter we shall denote elements of $\pi_\sim(N^*)$ by elements of $\mathbb{N}^k$ as defined above.

\subsubsection{Applications of \Cref{prop:pi-tilde-rec}}

Using \Cref{prop:pi-tilde-rec} we can show certain subsets of $N^*$ are not recognizable. We call a property $\mathcal{P}$ of $\mathbb{N}$, \emph{trivial} if the set $P = \{n \in \mathbb{N} \mid \mathcal{P}(n)\}$ is finite or co-finite. For $\sigma \in N^*$ we denote by $|\sigma|$ the smallest natural number $n_0$ such that for all $n > n_0$, $\sigma(n) = 0$. Examples of non-trivial properties are parity and primaility.

\begin{proposition} The following sets are not recognizable for every non-trivial property $\mathcal{P}$ of $\mathbb{N}$.
  \begin{enumerate}
    \item $X = \{\sigma \in N^* \mid \mathcal{P}(|\sigma|)\}$, 
    \item $X = \{\sigma \in N^* \mid \forall i \in \mathbb{N}, \mathcal{P}(\sigma(i))\}$, and 
  \item $X = \{\sigma \in N^* \mid \exists i \in \mathbb{N}, \mathcal{P}(\sigma(i))\}$.   \end{enumerate}
  \label{prop:non-recognizable-subsets-of-N*}
\end{proposition}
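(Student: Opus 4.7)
The strategy is uniform across the three parts: assume $X$ is recognizable, invoke \Cref{prop:pi-tilde-rec} to obtain an equivalence $\sim$ on $N$ of finite index $k$ with $X = (\pi_\sim^{-1} \circ \pi_\sim)(X)$, and then exhibit two sequences having equal $\pi_\sim$-image but opposite membership in $X$. Identifying $\sigma_n \leftrightarrow n$, I view $\sim$ as a partition of $\mathbb{N}$ into $k$ classes; let $P = \{n : \mathcal{P}(n)\}$, so both $P$ and $P^c$ are infinite by non-triviality.

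For (1), I split on whether some $\sim$-class contains distinct $n_1, n_2$ with $\mathcal{P}(n_1) \neq \mathcal{P}(n_2)$. If so, $\sigma_{n_1}$ and $\sigma_{n_2}$ share $\pi_\sim$-image while $|\sigma_{n_i}| = n_i$, so exactly one lies in $X$, a contradiction. Otherwise every class is $\mathcal{P}$-homogeneous, so $P$ and $P^c$ are unions of $\sim$-classes, and by pigeonhole there exist an infinite class $C_1 \subseteq P$ and an infinite class $C_2 \subseteq P^c$. I pick $n_1 \in C_1$, then $n_2 \in C_2$ with $n_2 > n_1$, then $n_1' \in C_1$ with $n_1' > n_2$: the sequences $\sigma_{n_1} + \sigma_{n_2}$ and $\sigma_{n_1'} + \sigma_{n_2}$ have identical $\pi_\sim$-image but lengths $n_2 \in P^c$ and $n_1' \in P$, contradicting $\pi_\sim$-invariance of $X$.

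For (2), the crux is a short combinatorial lemma: for any non-trivial $P$ there exist $v_1, v_2 \in \mathbb{N}_{>0}$ such that either $v_1, v_2 \in P$ and $v_1 + v_2 \in P^c$, or symmetrically $v_1, v_2 \in P^c$ and $v_1 + v_2 \in P$. To prove it, assume both fail; WLOG $0 \in P$ and set $m = \min P^c$. If $m = 1$, closure of $P^c \cap \mathbb{N}_{>0}$ under addition forces $P^c \supseteq \mathbb{N}_{>0}$, so $P \subseteq \{0\}$ is finite. If $m \geq 2$, both $1, m-1$ lie in $P_{>0}$ but $1 + (m-1) = m \in P^c$ violates closure of $P_{>0} + P_{>0} \subseteq P$. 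Since $\sim$ has finite index and $N$ is infinite, pick distinct $n \sim n'$; then $\sigma := v_1 \sigma_n + v_2 \sigma_{n'}$ and $\sigma' := (v_1 + v_2) \sigma_n$ share the $\pi_\sim$-image $(v_1 + v_2) \cdot e_{[n]_\sim}$, but every nonzero value of $\sigma$ lies in $P$ while the sole nonzero value of $\sigma'$ lies in $P^c$, giving $\sigma \in X$ and $\sigma' \notin X$ (the symmetric case being analogous).

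For (3), the complement of $X$ is $\{\sigma : \forall i,\ \neg \mathcal{P}(\sigma(i))\}$, which has the form handled in (2) for the non-trivial property $\neg \mathcal{P}$; since $\text{Rec}(N^*)$ is closed under complement by \Cref{prop:closure-properties-rec}, non-recognizability from (2) transfers to $X$. The main obstacle is the combinatorial lemma in (2), as the closure argument must exploit the specific arithmetic of $(\mathbb{N}, +)$; once it is in hand, every contradiction is manufactured by the same mechanical recipe of choosing equivalent generators and redistributing mass between them.
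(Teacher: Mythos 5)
Your overall skeleton is the same as the paper's: invoke \Cref{prop:pi-tilde-rec} to obtain a finite-index $\sim$ with $X = (\pi_\sim^{-1}\circ\pi_\sim)(X)$, then produce two sequences with equal $\pi_\sim$-image but opposite membership in $X$. Your part (1) is correct, though organized differently: the paper needs no case split, since a single pigeonhole argument yields $n_1 < n_2 < n_3$ with $\mathcal{P}(n_1)$, $\neg\mathcal{P}(n_2)$, $\mathcal{P}(n_3)$ and $\sigma_{n_1} \sim \sigma_{n_3}$ (take an equivalence class containing infinitely many elements of $P$ and interleave an element of $P^c$), after which $\sigma_{n_2}+\sigma_{n_3} \in X$ and $\sigma_{n_1}+\sigma_{n_2} \notin X$ finish the job. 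Your part (3) is exactly the paper's complementation argument. Your sum-closure lemma in part (2) is correctly proved and is a genuinely different ingredient from the paper, which instead reuses the triple $n_1 < n_2 < n_3$ and edits a member of $X$ at two equivalent coordinates.

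However, the application of your lemma in part (2) has a genuine gap: the inference ``every nonzero value of $\sigma$ lies in $P$, hence $\sigma \in X$'' ignores the zero entries. The witness $\sigma = v_1\sigma_n + v_2\sigma_{n'}$ vanishes at every position other than $n$ and $n'$, and membership in $X$ quantifies over all positions, so $\sigma \in X$ additionally requires $\mathcal{P}(0)$. For non-trivial properties with $\neg\mathcal{P}(0)$ --- primality, one of the paper's own illustrating examples, or ``odd'' --- your $\sigma$ in case (i) (and your $\sigma'$ in case (ii)) simply does not lie in $X$, and the contradiction evaporates. You cannot dodge this by choosing $n$ and $n'$ adjacent or small, because the equivalent pair $n \sim n'$ is handed to you by pigeonhole with no control over which pair it is. The repair is essentially the paper's device: work with a sequence that is nonzero and $P$-valued on all of its support, e.g.\ set every coordinate up to $\max(n,n')$ equal to $v_1$ and put $v_2$ at coordinate $n'$; then transfer $v_2$ from $n'$ to the equivalent coordinate $n$, so that only the two equivalent coordinates change, the $\pi_\sim$-image is preserved, and membership on both sides is decided by values you control ($v_1, v_2 \in P$ versus $v_1+v_2 \in P^c$). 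To be fair, the paper is itself terse here --- its chosen $s_1 \in X$ with $s_1(i)=n_1$, $s_1(j)=n_3$ at equivalent $i \sim j$ must also be constructed by such filling, and the statement itself is ambiguous about whether the quantifier sees the trailing zeros --- but its ``modify an existing member of $X$ at two coordinates'' formulation is robust to the value of $\mathcal{P}(0)$, while your explicit two-term witnesses are not.
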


\begin{proof}
  The proof template is the same for all three parts. We assume on the contrary there exists a finite monoid $M$ and morphism $\varphi: N^* \to M$ that recognizes $X$. By \Cref{prop:pi-tilde-rec} there exists an equivalence relation $\sim$ on $N$ of finite index $k$ such that $X = (\pi_\sim^{-1} \circ \pi_\sim) (X)$. Since $\sim$ is of finite index and $\mathcal{P}$ is not trivial, there exist natural numbers $n_1 < n_2 < n_3$ such that $\mathcal{P}(n_1)$ and $\mathcal{P}(n_3)$ but $\neg\mathcal{P}(n_2)$ holds and $\sigma_{n_1} \sim \sigma_{n_3}$. We show two sequences, $s_1 \in X$ and $s_2 \notin X$, but $\pi_\sim(s_1) = \pi_\sim(s_2)$ for each of the three parts.
  \begin{enumerate}
    \item  $s_1 = \sigma_{n_2} + \sigma_{n_3} \in X$ and $s_2 = \sigma_{n_1} + \sigma_{n_2} \notin X$ but $\pi_\sim(s_1) = \pi_\sim(s_2)$.

    \item Let $s_1 \in X$ and $\sigma_i \sim \sigma_j$ such that $s_1(i) = n_1$ and $s_1(j) = n_3$. We construct the sequence $s_2$ such that for all $l \neq i, j$, $s_2(l) = s_1(l)$, $s_2(i) = n_2$ and $s_2(j) = n_1 + n_3 - n_2$. Clearly $s_2 \notin X$ but $\pi_\sim(s_1) = \pi_\sim(s_2)$.

    \item If $X \in \text{Rec}(N^*)$ then $X^c \in \text{Rec}(N^*)$. However, $X^c = \{\sigma \in N^* \mid \forall i \in \mathbb{N}, \neg\mathcal{P}(\sigma(i))\}$ and $\neg\mathcal{P}$ is also non-trivial, contradicting the observation above.
  \end{enumerate}
  \qed
\end{proof}

Thus sequences of even naturals (which translates to perfect squares for the monoid of naturals under multiplication), or sequences of even length (naturals divisible by exactly an even number of primes) are not recognizable despite seeming to be easy computable by some finite state automaton that, say, takes input sequences encoded in unary.

Since $N^*$ is the free commutative monoid generated by a countably infinite set and the monoid $M_3$ defined in \Cref{rem:rec-infinite-generated-free-monoid-uncountable} is also commutative, we conclude that $\text{Rec}(N^*)$ is uncountable. We cannot expect a nice machine or logic characterization for $\text{Rec}(N^*)$.

\subsection{Positive rationals $(\mathbb{Q}_{>0}, \times, 1)$}

Recall that we looked at finite sequences of naturals with pointwise addition instead of positive naturals with multiplication. For this case, we shall look at the monoid of finite sequences of integers with pointwise addition instead.

Finite sequences of integers form a monoid under pointwise addition with the sequence of length 0 as the unit element. We define for each $n \in \mathbb{Z}$, the sequence of integers of length $|n|$, $\sigma_n$ where for each $1 \le i < |n|$, we have $\sigma_n(i) = 0$ and $\sigma_{n}(|n|) = n/|n|$. Let $\sigma_0$ be the sequence of length 0 and $Z = \{\sigma_n\}_{n \in \mathbb{Z}}$. $Z$ generates the monoid of finite sequences of integers under pointwise addition and thus we denote the monoid by $Z^*$.

Let $\varphi: A^* \to M$ be a morphism to a finite monoid $M$. .Then, by \Cref{lem:infinitely-generated-rec-morphism} there exists an equivalence relation $\sim$ on $Z$ of finite index $k$ and a morphism $\psi: (Z/\sim)^* \to M$ such that $\varphi = \psi \circ \pi_\sim$ where $\pi_\sim: Z^* \to (Z/\sim)^*$ is the projection morphism of $\sim$. 

Note that by the commutativity of $Z^*$, the monoid $(Z/\sim)^*$ is a commutative free monoid generated by $k$ elements and is hence isomorphic to $\mathbb{Z}^k$ ($k$-tuples of integers) under pointwise addition. Fix some enumeration $Z/\sim = \{\alpha_1, \dots \alpha_k\}$. We then have
\begin{equation*}
  \pi_\sim(\sigma) = \Big(\sum\limits_{i \in \alpha_j} \sigma(i)\Big)_{j \in [k]}
\end{equation*}
Hereafter we shall denote elements of $\pi_\sim(Z^*)$ by elements of $\mathbb{Z}^k$ as defined above.

\subsubsection{Applications of the \Cref{prop:pi-tilde-rec}}

The next proposition shows that every non-empty recognizable subset $X$ of $Z^*$ contains sequences of length greater than $p$ for all $p > 0$.
A consequence of the following proposition is that in every recognizable subset of $(\mathbb{Q}_{>0}, \times, 1)$, the set of prime factors of the elements of the subset is infinite.
\begin{proposition}
  If $X \in \text{Rec}(Z^*)$ and $X \neq \emptyset$, then for all integers $p > 0$, there exists a sequence $\sigma \in X$ such that $|\sigma| > p$.
  \label{prop:Rec-Z*-sequence-lenth-unbounded}
\end{proposition}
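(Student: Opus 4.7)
The strategy is to apply \Cref{prop:pi-tilde-rec} and then use pigeonhole on the infinite set $Z$ to produce two generators that are $\sim$-equivalent but whose indices can be made arbitrarily far apart; the group structure of $Z^*$ then lets us use this pair to lengthen a given sequence in $X$ without leaving $X$.

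Concretely, I would invoke \Cref{prop:pi-tilde-rec} to obtain an equivalence relation $\sim$ on $Z$ of finite index $k$ with $X = (\pi_\sim^{-1} \circ \pi_\sim)(X)$. Since $Z = \{\sigma_n\}_{n \in \mathbb{Z}}$ is partitioned into only $k$ classes, some class contains $\sigma_n$ for $n$ in an infinite set $S \subseteq \mathbb{Z}$, and in particular $\{|n| : n \in S\}$ is unbounded. Fix any $\tau \in X$; given $p > 0$, pick distinct $m, m' \in S$ with $|m'| > \max(p,\, |\tau|,\, |m|)$, and set $\sigma = \tau + \sigma_{m'} + \sigma_{-m}$.

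Since $\sigma_m + \sigma_{-m}$ is the identity of $Z^*$ and $\sigma_m \sim \sigma_{m'}$ gives $\pi_\sim(\sigma_m) = \pi_\sim(\sigma_{m'})$, a short computation yields
\[
  \pi_\sim(\sigma) \;=\; \pi_\sim(\tau) + \pi_\sim(\sigma_{m'}) + \pi_\sim(\sigma_{-m}) \;=\; \pi_\sim(\tau) + \pi_\sim(\sigma_m + \sigma_{-m}) \;=\; \pi_\sim(\tau),
\]
so $\sigma \in \pi_\sim^{-1}(\pi_\sim(X)) = X$. As $|m'|$ exceeds both $|m|$ and $|\tau|$, position $|m'|$ of $\sigma$ is fed only by $\sigma_{m'}$, where the value $\pm 1$ lives; hence $|\sigma| = |m'| > p$.

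The delicate point is arranging for the modification of $\tau$ to be invisible to $\pi_\sim$: this is why we need the \emph{distinct} pair $m, m'$ from the same $\sim$-class together with the ability to cancel the extra $\sigma_m$ we ``pay'' with, which is precisely what the group structure of $Z^*$ affords through $\sigma_{-m}$. The same move would fail in $N^*$, where generators have no inverses, matching the observation that singletons such as $\{\epsilon\}$ can be recognizable in $N^*$ but are ruled out in $Z^*$ by this proposition.
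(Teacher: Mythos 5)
Your proof is correct and takes essentially the same route as the paper's: both invoke \Cref{prop:pi-tilde-rec}, use pigeonhole on the finite-index relation $\sim$ to obtain two equivalent generators of arbitrarily large index, and then add one generator together with the inverse of the other to lengthen a sequence of $X$ without changing its image under $\pi_\sim$. The only cosmetic difference is that the paper picks both indices larger than $p \ge |\sigma|$ so the modification touches only zero positions of $\sigma$, whereas you cancel explicitly with $\sigma_{-m}$ and force $|m'| > \max(p, |\tau|, |m|)$; the underlying idea is identical.
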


\begin{proof}
  Let $\sigma \in X$ and $p \in \mathbb{N}$. If $|\sigma| > p$ then we are done. Assume $|\sigma| \le p$. Then there exists $n, n' > p$ such that $\sigma_n \sim \sigma_{n'}$. Note that $\sigma(n) = \sigma(n') = 0$. We construct the sequence $\sigma'$ where $\sigma'(n) = 1$, $\sigma'(n') = -1$, and for all $m \neq n, n'$, $\sigma'(m) = \sigma(m)$. Clearly, $|\sigma'| > p$ and $\pi_\sim(\sigma) = \pi_\sim(\sigma')$, and thus $\sigma' \in X$.
  \qed
\end{proof}

The proof of the following proposition follows the same idea in \Cref{prop:non-recognizable-subsets-of-N*} 

\begin{proposition}
  The following sets are not recognizable for every non-trivial property $\mathcal{P}$ of $\mathbb{Z}$.
  \begin{enumerate}
    \item $X = \{\sigma \in Z^* \mid \forall i \in \mathbb{N}, \mathcal{P}(\sigma(i))\}$, and 
  \item $X = \{\sigma \in Z^* \mid \exists i \in \mathbb{N}, \mathcal{P}(\sigma(i))\}$.
  \end{enumerate}
  \label{prop:Z*-unrecognizable}
\end{proposition}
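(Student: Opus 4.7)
The plan is to mirror the template from the proof of \Cref{prop:non-recognizable-subsets-of-N*}, adapting it to integer-valued sequences. Assume for contradiction that the set $X$ in question is recognized by a morphism $\varphi \colon Z^* \to M$ to a finite monoid. Applying \Cref{prop:pi-tilde-rec} yields an equivalence relation $\sim$ on $Z$ of finite index $k$ with $X = (\pi_\sim^{-1} \circ \pi_\sim)(X)$. Since $\mathcal{P}$ is non-trivial on $\mathbb{Z}$, both $\{n \in \mathbb{Z} \mid \mathcal{P}(n)\}$ and $\{n \in \mathbb{Z} \mid \neg \mathcal{P}(n)\}$ are infinite, so I can pick integers $n_1, n_3$ with $\mathcal{P}(n_1), \mathcal{P}(n_3)$ and $n_2$ with $\neg \mathcal{P}(n_2)$.

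For Part 1, restrict attention to the positive-indexed generators $\{\sigma_n : n > 0\}$; this set is infinite but $\sim$ has only $k$ classes, so by pigeonhole there exist distinct positions $i < j$ with $\sigma_i \sim \sigma_j$. Choose $s_1 \in X$ with $s_1(i) = n_1$ and $s_1(j) = n_3$ -- for example, the sequence whose value is $n_1$ at position $i$, $n_3$ at position $j$, and $0$ elsewhere, expressed in $Z^*$ via suitable integer combinations of $\sigma_{\pm i}$ and $\sigma_{\pm j}$. Construct $s_2$ by altering only positions $i, j$ so that $s_2(i) = n_2$ and $s_2(j) = n_1 + n_3 - n_2$. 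Then $s_2 \notin X$ because $\neg \mathcal{P}(n_2)$ violates the universal condition at position $i$, yet $\pi_\sim(s_1) = \pi_\sim(s_2)$ since the sum over the class $[\sigma_i]_\sim = [\sigma_j]_\sim$ is preserved ($n_1 + n_3 = n_2 + (n_1 + n_3 - n_2)$) and the sums over other classes are unchanged. This contradicts $X = (\pi_\sim^{-1} \circ \pi_\sim)(X)$.

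For Part 2, the plan is to invoke closure of $\text{Rec}(Z^*)$ under complement (\Cref{prop:closure-properties-rec}): if $X = \{\sigma \in Z^* \mid \exists i, \mathcal{P}(\sigma(i))\}$ were recognizable, then so would be $X^c = \{\sigma \in Z^* \mid \forall i \in \mathbb{N}, \neg \mathcal{P}(\sigma(i))\}$. Since $\neg \mathcal{P}$ is also non-trivial, applying Part 1 to $\neg \mathcal{P}$ produces the required contradiction. The main subtlety is in Part 1, namely producing an $s_1 \in X$ with prescribed values $n_1, n_3$ at positions $i, j$: the remaining positions of $s_1$ must also satisfy $\mathcal{P}$, which under the convention $\sigma(p) = 0$ outside the support tacitly requires $\mathcal{P}(0)$ to hold (otherwise $X$ would be empty, hence trivially recognizable, making the statement vacuous).
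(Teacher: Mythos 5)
Your proposal is correct and follows exactly the route the paper intends: the paper's own ``proof'' is just the remark that the result follows the same template as \Cref{prop:non-recognizable-subsets-of-N*}, namely applying \Cref{prop:pi-tilde-rec} to get a finite-index relation $\sim$, using pigeonhole to find equivalent generators $\sigma_i \sim \sigma_j$, swapping values $n_1, n_3$ (satisfying $\mathcal{P}$) for $n_2, n_1+n_3-n_2$ (with $\neg\mathcal{P}(n_2)$) to break the $\forall$-set while preserving $\pi_\sim$, and handling the $\exists$-set by complementation via \Cref{prop:closure-properties-rec}. Your closing observation about the tacit requirement $\mathcal{P}(0)$ (under the convention $\sigma(i)=0$ beyond the support) is a genuine subtlety that the paper itself glosses over, and flagging it strengthens rather than weakens your argument.
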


\Cref{rem:rec-infinite-generated-free-monoid-uncountable} doesn't apply directly here as it did for $N^*$ since $M_3$ is not a group but $Z^*$ is. However indeed, $\text{Rec}(Z^*)$ is uncountable. Given any arbitrary subset $X \subseteq \mathbb{Z}$ we construct the set
\begin{equation*}
  S(X) = \{\sigma \in Z^* \mid \sum_{n \in X} \sigma(|n|) = 1\ \land \sum_{n \notin X} \sigma(|n|) = 0\}.
\end{equation*}
It is not difficult to see that $S(X) = S(Y)$ iff $X = Y$ for any $X, Y \subseteq Z$. Indeed, each $S(X) \in \text{Rec}(Z^*)$. Define $\sigma_n \sim \sigma_m$ iff $n, m \in X$. Thus $\pi_\sim(S(X)) = (0, 1)$ which is trivially recognizable in $Z^2$. Thus we have the same conclusion for $\text{Rec}(Z^*)$: we cannot expect to get a nice machine or logic characterization for $\text{Rec}(Z^*)$.

\section{Conclusion} \label{sec-conclusion}
We looked at the recognizable subsets of the additive and multiplicative monoids of integers, rationals, reals, and more. The arbitrarily divisible property of some of the monoids implied they have rather trivial recognizable sets. Hence recognizability by finite monoids cannot be the right notion of algebraic recognizability. On the other hand, for integers and rationals with multiplication we encounter a more complicated problem. They have an uncountable number of recognizable subsets and many computable subsets are not recognizable. 

It would be interesting to look at monoids of matrices under multiplication or in general monoids of relations or functions over the domains considered. 
Our general results above do not translate well or immediately show us the recognizable subsets of these monoids since they have different structures, especially since they may not be arbitrarily divisible and their generators may not be easily identified. 
Further, recognizable sets of relations, or transductions, also have been extensively studied for the case of relations over free monoids \cite{eilenberg1974automata, sakarovitch2009elements}. 
It might be interesting to see if the observations in the present work also show up in the case of relations over rationals and reals.

\bibliographystyle{splncs04}
\bibliography{rec-zoo}
\end{document}